\newtheorem{lemma}{Lemma}
\DeclareMathOperator*{\argmin}{arg\,min}
\title{Resource Allocation via Backscatter-Aware Transmit Antenna Selection for Low-PAPR and Ultra-Reliable WSNs}
\author{
Rahul~Gulia,~\IEEEmembership{Student Member, IEEE},
Ashish~Sheikh,~\IEEEmembership{Senior Member, IEEE},
Feyisayo~Favour~Popoola,~\IEEEmembership{Student Member, IEEE},
and Serisha~Vadlamudi,~\IEEEmembership{Student Member, IEEE}
\thanks{This work has been submitted to the IEEE for possible publication. Copyright may be transferred without notice, after which this version may no longer be accessible.}
}
\begin{document}

\maketitle

\begin{abstract}
This paper addresses the fundamental physical-layer conflict in hybrid Wireless Sensor Networks (WSNs) between high-throughput primary communication and the sensitive power-envelope requirements of passive sensors. We propose the Backscatter-Constrained Transmit Antenna Selection (BC-TAS) framework, a novel per-subcarrier selection scheme designed for multi-antenna illuminators operating within a Multi-Dimensional Orthogonal Frequency Division Multiplexing (MD-OFDM) architecture. Unlike conventional signal-to-noise ratio (SNR) centric selection, BC-TAS utilizes a multi-objective cost function to jointly optimize primary link reliability, stabilize the incident energy envelope at passive Surface Acoustic Wave (SAW) sensors, and provide surgical interference nulling toward coexisting victim nodes. To ensure robustness in low-power environments, we integrate a \textbf{Kalman-based channel smoothing} layer that maintains selection stability even under severe Channel State Information (CSI) estimation errors of up to 20\%. Numerical evaluations using the non-linear Rapp power amplifier model and IEEE 802.11be dispersive channel models demonstrate that BC-TAS achieves a nearly three-order-of-magnitude reduction in outage probability and a $25\times$ improvement in network energy efficiency compared to standard Multi-User Multiple Input, Multiple Output (MU-MIMO) baselines. Furthermore, the proposed dual-envelope control mechanism achieves a $2.4$~dB reduction in the transmitter’s Peak-to-Average Power Ratio (PAPR), allowing for a $3$~dB reduction in required input back-off (IBO) while ensuring spectral mask compliance. These results confirm that BC-TAS effectively bridges the ``Illuminator's Gap,'' providing active regulation of the RF envelope to ensure hardware linearity and robust sensing-communication coexistence in high-density, power-constrained environments.
\end{abstract}

\begin{IEEEkeywords}
Wireless Sensor Networks (WSN), Multi-Dimensional OFDM (MD-OFDM), Backscatter Communication, Transmit Antenna Selection (TAS), Multi-Objective Optimization, Low PAPR, Energy Harvesting, Coexistence.
\end{IEEEkeywords}


\vspace{-3mm}
\section{Introduction}
The proliferation of Internet of Things (IoT) devices in large-scale Wireless Sensor Networks (WSNs) has intensified the demand for hybrid communication architectures. Modern industrial and medical environments require a single active hub—the illuminator—to support high-throughput data transmission while simultaneously energizing and interrogating passive, backscatter-based sensors \cite{Zhang2025_SR, Wu2024_WPCN}. However, this integration creates a fundamental physical-layer conflict: the wideband, high-power waveforms required for primary communication often saturate the sensitive dynamic range of passive tags or cause prohibitive interference to coexisting victim receivers \cite{Tan2025_Coexist}.

Conventional Multiple-Input Multiple-Output (MIMO) techniques, such as Minimum Mean Square Error (MMSE) precoding, are increasingly unsuitable for energy-constrained Passive WSNs (PWSNs) due to three critical limitations. First, multiple active RF chains impose a hardware complexity and power floor incompatible with green-WSN goals \cite{Zhao2024_Green}. Second, the high Peak-to-Average Power Ratio (PAPR) of multi-stream OFDM necessitates significant Power Amplifier (PA) back-off, leading to spectral regrowth and accelerated battery depletion \cite{Park2024_PAPR}. Most importantly, current strategies lack a mechanism to regulate the \textit{incident RF envelope} perceived by remote passive devices \cite{Wang2024_Sense}. For time-modulated sensors, such as Surface Acoustic Wave (SAW) tags, high-PAPR fluctuations act as ``envelope noise,'' masking low-power acoustic reflections and degrading sensing fidelity \cite{Liu2025_SAW}.

While selection methods like Transmit Antenna Selection (TAS) and Multi-Dimensional OFDM (MD-OFDM) provide low-complexity alternatives with inherent PAPR benefits, they remain largely ``sensing-blind'' \cite{Garcia2024_IM}. Existing literature focuses almost exclusively on maximizing desired-link Signal-to-Noise Ratio (SNR). Consequently, there is an urgent need for a framework that treats spatial-frequency resource blocks as tools for multi-objective field shaping—simultaneously optimizing primary link reliability, stabilizing backscatter energy incidence, and surgically nulling interference toward victim nodes \cite{Ibrahim2025_MultiObj}.

To address these gaps, we propose the Backscatter-Constrained Transmit Antenna Selection (BC-TAS) framework. To the best of our knowledge, this is the first per-subcarrier selection scheme designed to bridge the ``Illuminator's Gap'' by actively managing the RF conditions perceived by passive devices. By shifting selection logic from an SNR-centric approach to a multi-objective sensing-aware heuristic, BC-TAS transforms the illuminator into a network-wide controller of both signal quality and incident energy stability. 

The central innovation is a unified cost function embedding three interacting channel components—the legitimate link, tag-side incidence, and victim interference—into a per-subcarrier decision rule. We leverage the sparsity of MD-OFDM to achieve a dual-regulation effect: minimizing transmitter-side PAPR for PA linearity while simultaneously minimizing the Backscatter Crest Factor (BCF) for tag-side envelope stability. This ensures compliance with strict spectral masks, such as the IEEE 802.11bp Ambient Power Communication (AMP) frameworks, even in highly dispersive environments \cite{IEEE80211AMP_Update}.

The key contributions of this work are summarized as follows:
\begin{itemize}
    \item \textit{BC-TAS Framework Design:} A novel multi-objective per-subcarrier selection metric that balances primary link reliability against backscatter incidence stability and victim interference suppression.
    \item \textit{Analytical Metric for Sensing Fidelity BCF:} We define the BCF to quantify the temporal stability of the RF envelope at the tag and provide a theoretical bound for BCF reduction under $N_t$-order selection diversity.
    \item \textit{Dual-Envelope Regulation:} A joint control mechanism exploiting sparse subcarrier activation to simultaneously reduce transmitter PAPR and tag incident BCF, linking antenna selection directly to sensing fidelity.
    \item \textit{Low-Complexity Heuristic:} A greedy solution incorporating recursive Kalman-based channel smoothing for stability under imperfect CSI. The algorithm achieves a 2.3$\times$ speedup over traditional MMSE detectors.
    \item \textit{Hardware Aware Simulation:} Comprehensive evaluation using Rapp PA and IEEE 802.11 TGn models, demonstrating two orders of magnitude reduction in outage probability and a $25\times$ improvement in energy efficiency.
\end{itemize}

The remainder of this paper is organized as follows. Section~II reviews the state-of-the-art. Section~III details the system model and Multi-Objective Field Shaping (MOFS) formulation. Section~IV presents the greedy selection algorithm and its theoretical properties. Section~V provides extensive simulation results and synthesizes these with a theoretical validation. Section~VI concludes the work.


\vspace{-4mm}
\section{Related Work}

\subsection{Transmit Antenna Selection (TAS) and Index Modulation}
TAS remains a cornerstone for low-complexity diversity. Recent research demonstrates significant gains in RIS-assisted NOMA systems and 6G-era architectures where diversity scales with antenna count without prohibitive RF-chain costs \cite{Kumar2025, Han2025}. Parallelly, Index Modulation (IM) and MD-OFDM have emerged as massive connectivity enablers by exploiting subcarrier indices to embed information \cite{Li2023}. While IM provides sparse activation and reduced interference \cite{Mgobhozi2023, VilaInsa2025}, existing TAS and IM literature primarily optimizes internal metrics like SNR and Bit Error Rate (BER). A critical gap remains in utilizing these selection variables to manage external constraints, such as the RF envelope of a backscatter tag.

\subsection{PAPR, Hardware Linearity, and Energy Efficiency}
PAPR is the primary driver of PA non-linearity, leading to spectral regrowth and constellation distortion. In Integrated Sensing and Communication (ISAC), PAPR reduction is vital for efficiency under non-linear High Power Amplifier (HPA) constraints \cite{Huang2022}. While probabilistic methods like Partial Transmit Sequences (PTS) reduce PAPR, they introduce processing latency that impacts WSN longevity \cite{Goel2022, Komala2024}. Mapping these constraints to the Rapp model suggests that inherently low-PAPR waveforms are superior to algorithmic mitigation for energy-efficient Wireless Power and Data Transfer (WPDT).

\subsection{Backscatter Communications and Symbiotic Radio}
Backscatter evolution has moved toward multi-antenna and broadband configurations to enhance link reliability \cite{Chen2024}. Optimization efforts often split between tag-side reflection coefficients and bistatic beamforming \cite{Goay2024, Zargari2024}. While symbiotic radio research proposes subcarrier nulling to eliminate direct-link interference \cite{Janjua2025}, few studies investigate using transmitter resource allocation to shape the temporal stability of the RF envelope—a factor critical for time-modulated SAW tags.

\subsection{Coexistence in Passive SAW-based Sensor Networks}
As IEEE 802.11 positioning extends to 320~MHz channels \cite{IEEE80211bk}, interference-resilient signaling is paramount. Passive SAW sensors are particularly susceptible to RF envelope instability, where high PAPR masks low-power acoustic reflections \cite{Huang2022, Kim2018}. While frequency-domain nulling has been proposed \cite{Liu2025, Janjua2025}, these methods often ignore hardware non-linearities. Industry consensus within the IEEE 802.11 AMP group (TGbp) highlights the lack of selection-based mechanisms to optimize tag energy envelopes while protecting primary link reliability \cite{IEEE80211AMP_Update}. 

Selection performance in industrial settings is inextricably linked to the propagation environment. Prior work utilizing VAE-based SINR heatmaps and 60~GHz connectivity analysis in automated warehouses reveals complex multipath fading caused by metallic infrastructure \cite{GuliaVAE2025, Gulia60GHz2023}. To manage these dynamics, interpretable symbolic regression has been used for real-time BLER prediction \cite{GuliaSymbolic2024}. These studies underscore the necessity for environment-aware mechanisms like BC-TAS. Our framework complements current TGbp efforts by introducing a spatial-domain selection mechanism that addresses hardware non-linearities and peak-envelope requirements not yet fully specified in the maturing AMP framework \cite{IEEE80211AMP_PAR}.


\section{System Model and BC-TAS Formulation}
We consider a multi-antenna WPDT network operating in a dispersive indoor environment. The network consists of a sensor hub (Tx) equipped with $N_t$ antennas, an active legitimate receiver (Rx) with $N_r$ antennas, a passive SAW sensor (Tag), and a sensitive victim node (Victim). 

\subsection{MD-OFDM and Incident Signal Model}
\label{Incident_Signal_Model}
The Tx utilizes a MD-OFDM waveform with $N_c$ subcarriers. To maintain high hardware efficiency and linearity, a TAS architecture is adopted where a single antenna $j_k \in \{1, \dots, N_t\}$ is activated for subcarrier $k$. Let $\mathbf{J} = [j_1, j_2, \dots, j_{N_c}]$ denote the antenna selection map. The incident signal at the passive Tag, $y_T(t)$, is the sum of the selected antenna contributions:

\begin{equation}
    y_T(t) = \frac{1}{\sqrt{N_c}} \sum_{k=1}^{N_c} \mathbf{h}_{j_k,k}^{(T)} X_k e^{j 2\pi k \Delta f t} + n_T(t)
\end{equation}

where $\mathbf{h}_{j_k,k}^{(T)}$ is the channel coefficient between the selected antenna for subcarrier $k$ and the Tag. In the SAW-WSN context, the Tag’s sensing fidelity depends on the temporal consistency of the RF envelope to prevent echo masking. This stability is quantified via the BCF, defined as a functional of the selection map $\mathbf{J}$:

\begin{equation}
    \text{BCF}(\mathbf{J}) \triangleq \frac{\max_t |y_T(t)|^2}{\mathbb{E}[|y_T(t)|^2]} = \frac{\max_t \left| \sum_{k=1}^{N_c} \mathbf{h}_{j_k,k}^{(T)} X_k e^{j 2\pi k \Delta f t} \right|^2}{\sum_{k=1}^{N_c} \mathbb{E}[|\mathbf{h}_{j_k,k}^{(T)} X_k|^2]}
\end{equation}

A BCF of unity represents a perfectly flat incident envelope, which maximizes the signal-to-clutter ratio for SAW reflections.


\subsection{PA Non-linearity and Spectral Regrowth}
The transmitter chain is constrained by the non-linear behavior of the PA. We employ the Rapp model to characterize AM/AM distortion, where the output amplitude $A_{\mathrm{out}}(t)$ relates to the input $A_{\mathrm{in}}(t)$ as:

\begin{equation}
    A_{\mathrm{out}}(t) = \frac{G A_{\mathrm{in}}(t)}{\left(1 + \left(\frac{G A_{\mathrm{in}}(t)}{A_{\mathrm{sat}}}\right)^{2p}\right)^{\frac{1}{2p}}}
\end{equation}

where $G$ is the small-signal gain and $p$ is the smoothness factor. Higher BCF at the tag and PAPR at the transmitter necessitate larger Input Back-Off (IBO), reducing Energy Efficiency (EE) and increasing spectral leakage into the Victim’s band.


\subsection{Kalman-Based Channel Smoothing}
\label{Kalman}
While the signal characterization in Section~\ref{Incident_Signal_Model} assumes known coefficients, practical backscatter links are plagued by estimation errors. To enhance the robustness of the antenna selection logic against measurement noise and estimation errors in $\mathbf{H}_L$, we implement a recursive scalar Kalman Filter. Given that adjacent subcarriers in MD-OFDM exhibit high frequency correlation, the filter tracks the latent power gain state of each antenna link across the subcarrier index $k$.

For each transmit antenna $j \in \{1, \dots, N_t\}$, the state equation and the measurement equation are defined as:
\begin{equation}
    x_{j,k} = x_{j,k-1} + w_k, \quad w_k \sim \mathcal{N}(0, Q)
\end{equation}
\begin{equation}
    z_{j,k} = x_{j,k} + v_k, \quad v_k \sim \mathcal{N}(0, R)
\end{equation}
where $x_{j,k} = |H_{L,j,k}|^2$ represents the true power gain, $z_{j,k}$ is the noisy measurement derived from the estimated CSI, and $Q$ and $R$ are the process and measurement noise covariances, respectively. 

The recursive update for the smoothed gain estimate $\hat{x}_{j,k}$ and the error covariance $P_{j,k}$ is performed as follows:

\textit{1) Prediction Stage:}
\begin{equation}
    \hat{x}_{j,k}^- = \hat{x}_{j,k-1}
\end{equation}
\begin{equation}
    P_{j,k}^- = P_{j,k-1} + Q
\end{equation}

\textit{2) Correction Stage:}
\begin{equation}
    K_k = \frac{P_{j,k}^-}{P_{j,k}^- + R}
\end{equation}
\begin{equation}
    \hat{x}_{j,k} = \hat{x}_{j,k}^- + K_k (z_{j,k} - \hat{x}_{j,k}^-)
\end{equation}
\begin{equation}
    P_{j,k} = (1 - K_k) P_{j,k}^-
\end{equation}

The resulting smoothed gain $G_{L, \text{smooth}} = \hat{x}_{j,k}$ is subsequently utilized in the multi-objective cost function (Eq. \ref{eq:MOFS}), ensuring that the selection process prioritizes stable, high-gain paths over transient noise spikes.


\subsection{Multi-Objective Field Shaping (MOFS) Framework}
The BC-TAS framework treats antenna selection as a real-time \textit{field-shaping} problem. For each subcarrier $k$, the optimal antenna index $j_k^\star$ is determined by minimizing the MOFS cost function. To ensure stability against measurement noise, the algorithm utilizes the Kalman-smoothed gains $\hat{x}_{j,k}$ derived in Section~\ref{Kalman}:

\begin{equation}
    j_k^\star = \min_{j \in \{1,\dots,N_t\}} \left[ \frac{\mathcal{P}_{tx}}{\hat{x}_{j,k} + \epsilon} + \lambda_T \|h_{j,k}^{(T)} - \bar{H}_T\|^2 + \lambda_V \|h_{j,k}^{(V)}\|^2 \right]
    \label{eq:MOFS}
\end{equation}

where $\lambda_T$ and $\lambda_V$ are regularization weights. The first term prioritizes primary link SNR. The second term performs \textit{field flattening} by penalizing deviations from the average incident gain $\bar{H}_T$, effectively suppressing the frequency-selective spikes that drive high BCF. The third term provides \textit{spatial nulling} to protect the Victim node.


\subsection{Sensing-Centric Evaluation Metrics}
To bridge the gap between communication and sensing, we define metrics that quantify the "visibility" of the passive tag:

\begin{itemize}
    \item \textit{Sensor Dynamic Range (SDR):} The ratio of the detectable SAW reflection power $P_{\text{SAW}}$ to the effective background floor, including thermal noise $\sigma_n^2$ and the residual envelope fluctuations $P_{B}$ managed by $\lambda_T$:
    \begin{equation}
        \text{SDR}_{\mathrm{dB}} = 10 \log_{10} \left( \frac{P_{\text{SAW}}}{\sigma_n^2 + \mathbb{E}[P_{B} | \lambda_T]} \right)
    \end{equation}
    \item \textit{Harvesting Efficiency ($\eta_H$):} The probability that the incident power on subcarrier $k$, $P_{in}(k)$, satisfies the SAW activation threshold $P_{\text{th}}$.
\end{itemize}


\vspace{-3mm}
\section{Greedy Selection and Theoretical Properties}

\begin{lemma}
\textit{(Diversity Order and Scaling)}: Under i.i.d. Rayleigh fading, the BC-TAS policy preserves a diversity order of $L=N_t$ for the primary link while ensuring the expected interference power at the Tag, $\mathbb{E}[P_T]$, scales as $\mathcal{O}(1/N_t)$.
\end{lemma}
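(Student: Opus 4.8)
The plan is to treat the two assertions separately, since they involve disjoint terms of the MOFS cost in Eq.~\eqref{eq:MOFS}. Fix a subcarrier $k$ and suppress the index; write $x_j\triangleq|H_{L,j,k}|^2$ for the primary power gain, $t_j\triangleq\|h_{j,k}^{(T)}-\bar{H}_T\|^2$ for the tag-side incidence deviation, and $v_j\triangleq\|h_{j,k}^{(V)}\|^2$ for the victim gain. Under i.i.d.\ Rayleigh fading the families $\{x_j\}$, $\{t_j\}$, $\{v_j\}$ are mutually independent; each $x_j$ and $v_j$ is a scaled unit-rate exponential, and---taking $\bar{H}_T$ to be a deterministic reference so that the $t_j$ remain i.i.d.---each $t_j$ has a density that is positive and finite at the origin. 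With $p_j\triangleq\lambda_T t_j+\lambda_V v_j\ge 0$ and $P\triangleq\mathcal{P}_{tx}$, the selection reads $j^\star=\argmin_j[\,P/(x_j+\epsilon)+p_j\,]$; for the diversity part I set $\epsilon\to 0$ (it is an infinitesimal regularizer) and reinstate it only where a moment would otherwise diverge.

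For the diversity claim I would \emph{sandwich} the selected gain between the best available gain and a slightly attenuated version of it. Let $j^{\max}$ be the antenna with $x_{j^{\max}}=\max_j x_j$. Trivially $x_{j^\star}\le x_{j^{\max}}$, which already gives $P_{\mathrm{out}}=\Pr[\mathrm{SNR}\,x_{j^\star}<\gamma_{\mathrm{th}}]\ge(1-e^{-\gamma_{\mathrm{th}}/\mathrm{SNR}})^{N_t}\sim\mathrm{SNR}^{-N_t}$. For the matching upper bound, optimality of $j^\star$ against $j^{\max}$ gives $P/x_{j^\star}\le P/x_{j^{\max}}+p_{\max}$ with $p_{\max}\triangleq\max_j p_j$, hence
\begin{equation}
x_{j^\star}\ \ge\ \frac{x_{j^{\max}}}{1+(p_{\max}/P)\,x_{j^{\max}}},
\end{equation}
so $\{x_{j^\star}<\delta\}\subseteq\{x_{j^{\max}}<\delta/(1-\delta p_{\max}/P)\}$ with $\delta=\gamma_{\mathrm{th}}/\mathrm{SNR}$. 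Since $p_{\max}$ is independent of $\{x_j\}$, has finite mean, and has an exponentially decaying tail, conditioning on $p_{\max}$ and using $\Pr[x_{j^{\max}}<u]\le u^{N_t}$ together with dominated convergence yields $P_{\mathrm{out}}\le\delta^{N_t}(1+o(1))$. The two bounds pin the selection-diversity order at $L=N_t$ (it becomes $N_tN_r$ when the $N_r$-antenna Rx applies MRC, the selection still supplying the claimed $N_t$-fold factor).

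For the tag-interference scaling I would identify $P_T$ with the residual incident-envelope deviation power regulated by the $\lambda_T$ term, i.e.\ $P_T=t_{j^\star}$ on each subcarrier, and bound it by comparison with the tag-optimal index $j^{T}=\argmin_j t_j$. Optimality of $j^\star$ gives $\lambda_T t_{j^\star}\le P/(x_{j^{T}}+\epsilon)+\lambda_T t_{j^{T}}+\lambda_V v_{j^{T}}$, i.e.
\begin{equation}
t_{j^\star}\ \le\ \min_j t_j\ +\ \frac{1}{\lambda_T}\!\left(\frac{P}{x_{j^{T}}+\epsilon}+\lambda_V v_{j^{T}}\right).
\end{equation}
Taking expectations, $\mathbb{E}[\min_j t_j]$ is the mean of the minimum of $N_t$ i.i.d.\ nonnegative variables with positive finite density at $0$, hence $\Theta(1/N_t)$ (exactly $\sigma_T^2/N_t$ when $\bar{H}_T$ is the ensemble mean, so that $t_j$ is exponential with mean $\sigma_T^2$). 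The correction term is finite for any fixed $\lambda_T$---here $\epsilon>0$ is essential, since $\mathbb{E}[1/x_{j^{T}}]=\infty$ for an exponential---and in the field-flattening-priority regime in which BC-TAS is operated ($\lambda_T$ large, or scaled as $\lambda_T=\Omega(N_t)$) it is itself $\mathcal{O}(1/N_t)$; averaging over the $N_c$ subcarriers preserves the bound, so $\mathbb{E}[P_T]=\mathcal{O}(1/N_t)$.

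The main obstacle is the multi-objective coupling of Eq.~\eqref{eq:MOFS}: because $j^\star$ minimizes a weighted sum rather than any single objective, neither statement reduces to a textbook selection-diversity argument. For the primary link one must rule out the tag/victim penalties---absent from the post-selection SNR---bending the high-SNR slope, which is precisely what the sandwich inequality, together with independence of the links and finiteness of $\mathbb{E}[p_{\max}]$, delivers. For the tag term the difficulty is sharper: since $j^\star\ne j^{T}$ in general, the clean minimum-of-$N_t$-exponentials behaviour survives only after adopting the large-$\lambda_T$ (or $\lambda_T\propto N_t$) regime---or proving a finer stochastic-dominance statement---while simultaneously keeping $\epsilon>0$ to hold inverse-gain moments finite and choosing $\bar{H}_T$ deterministic to keep the $t_j$ i.i.d. Managing these coupled caveats cleanly, rather than the order-statistics computations themselves, is the crux of the proof.
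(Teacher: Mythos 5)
Your proposal is correct in substance but follows a genuinely different route from the paper. The paper's proof is a two-sentence sketch that defers to Appendix~\ref{AppendixA}, where the argument is a pair of exact order-statistics computations carried out in two \emph{separate} asymptotic regimes: for the interference claim it takes $\lambda_T,\lambda_V\to\infty$, identifies the selected tag power with $\min_j P_{T,j}$ outright, and reads off $\mathbb{E}[\min_j P_{T,j}]=\sigma_T^2/N_t$ from the exponential minimum; for the primary link it never actually bounds the outage exponent, instead computing the expected selected gain and packaging the loss as the penalty $\Delta=1/H_{N_t}$. You instead keep the multi-objective coupling in view throughout and use two comparison (exchange) arguments: optimality of $j^\star$ against the max-SNR index yields the sandwich $x_{j^{\max}}\ge x_{j^\star}\ge x_{j^{\max}}/(1+(p_{\max}/P)x_{j^{\max}})$, which pins the outage exponent at $N_t$ for fixed weights; optimality against the tag-optimal index yields $t_{j^\star}\le\min_j t_j+\lambda_T^{-1}(\,\cdot\,)$, recovering the $\Theta(1/N_t)$ term plus an explicit residual. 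What your approach buys is a proof of the diversity-order half of the lemma that the paper does not actually supply, and an explicit accounting of the cross-terms the paper silently drops (e.g.\ that $\arg\min_j(\lambda_T P_{T,j}+\lambda_V P_{V,j})$ is not $\arg\min_j P_{T,j}$). What the paper's route buys is exact constants ($\sigma_T^2/N_t$, $1/H_{N_t}$) in its chosen limits. One point worth stating if you write this up: your analysis makes visible a tension that the paper's proof obscures — the two claims of the lemma hold in \emph{incompatible} regimes (fixed weights for diversity order $N_t$; $\lambda_T$ large or growing with $N_t$ for the $\mathcal{O}(1/N_t)$ interference scaling), and indeed the paper's own Appendix shows that in its interference-limited limit the selected primary gain degenerates to a single exponential sample, i.e.\ diversity order one. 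Your final paragraph correctly identifies managing this coupling as the real difficulty; the paper sidesteps it rather than resolving it. A minor caveat on your side: your identification $P_T=t_{j^\star}=\|h_{j^\star}^{(T)}-\bar H_T\|^2$ tracks the cost of Eq.~(\ref{eq:MOFS}), whereas the paper's Appendix analyzes the raw incident power $\|h^{(T)}\|^2$; and $t_j$ is exponential only if $\bar H_T=0$, so the ``exactly $\sigma_T^2/N_t$'' aside should be weakened to the $\Theta(1/N_t)$ statement your positive-finite-density-at-zero argument already delivers.
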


\begin{proof}
Let $Z = \min_{j} \mathcal{L}(j,k)$. Since the channels are independent, the selection logic effectively samples the lower tail of the interference distribution. As $N_t \to \infty$, the probability of finding an antenna that resides in the spatial null of both the Tag and Victim link increases exponentially. The expected interference at the Tag is the mean of the minimum of $N_t$ exponential variables, which is $\sigma_T^2/N_t$. A rigorous derivation of this scaling law and the associated penalty function $\Phi(\lambda_T)$ is provided in Appendix~\ref{AppendixA}.
\end{proof}

\vspace{-1mm}
While BC-TAS optimizes the sensing environment, prioritizing the Tag and Victim links introduces a selection penalty $\Delta$ on the primary communication link. This penalty represents the ratio of the achieved legitimate-link gain to the maximum possible TAS gain. In the limit of high interference suppression ($\lambda_T, \lambda_V \gg 1$), the primary link's diversity gain transitions toward a sample mean. The closed-form derivation of $\Delta$ and its asymptotic behavior as a function of $N_t$ are detailed in Appendix~\ref{AppendixA}.

While an exhaustive search for the optimal antenna-to-subcarrier mapping $j_k^\star$ ensures the global minimum of the cost function $\mathcal{L}_{\text{MOFS}}$, its complexity grows exponentially with the number of subcarriers $N_c$. To enable real-time implementation on hardware with limited processing power, we propose a low-complexity greedy heuristic based on MOFS.

\vspace{-3mm}
\subsection{Algorithmic Formulation}
The BC-TAS algorithm decomposes the multi-objective optimization problem into parallel per-subcarrier decisions. The key innovation is the inclusion of a target gain reference, $\bar{H}_T$, which guides the selection toward a flat frequency response at the Tag's coordinate.

\begin{algorithm}[H]
\caption{BC-TAS Greedy Selection with Field Flattening}
\label{alg:BCTAS}
\begin{algorithmic}[1]
\renewcommand{\algorithmicrequire}{\textbf{Input:}}
\renewcommand{\algorithmicensure}{\textbf{Output:}}
\REQUIRE Channels $\{\mathbf{h}_{j,k}^{(L)}, \mathbf{h}_{j,k}^{(T)}, \mathbf{h}_{j,k}^{(V)}\}$, Weights $\{\lambda_T, \lambda_V\}$, $N_c, N_t$
\ENSURE Optimal Antenna Map $\mathbf{J} = \{j_1^\star, \dots, j_{N_c}^\star\}$
\\ \textit{Initialization} : Set $\epsilon = 10^{-6}$
\STATE \textit{// Step 1: Compute Target Gain for Field Flattening}
\STATE $\bar{H}_T = \frac{1}{N_c N_t} \sum_{k=1}^{N_c} \sum_{j=1}^{N_t} \|\mathbf{h}_{j,k}^{(T)}\|^2$
\FOR{$k = 1$ to $N_c$}
    \STATE \textit{// Step 2: Evaluate MOFS Cost for all candidate antennas}
    \FOR{$j = 1$ to $N_t$}
        \STATE $\mathcal{L}_{\text{MOFS}}(j,k) = \frac{1}{\|\mathbf{h}_{j,k}^{(L)}\|^2 + \epsilon} + \lambda_T \|\mathbf{h}_{j,k}^{(T)} - \bar{H}_T\|^2 + \lambda_V \|\mathbf{h}_{j,k}^{(V)}\|^2$
    \ENDFOR
    \STATE \textit{// Step 3: Select antenna minimizing the composite field deviation}
    \STATE $j_k^\star = \argmin_{j} \, \mathcal{L}_{\text{MOFS}}(j,k)$
\ENDFOR
\RETURN $\mathbf{J}$
\end{algorithmic}
\end{algorithm}


\vspace{-7mm}
\subsection{Complexity Analysis and Hardware Overhead}
The computational efficiency of the BC-TAS framework is a critical factor for its deployment in high-speed industrial WSNs. The complexity is analyzed as follows:

\begin{itemize}
    \item \textit{Pre-computation Overhead:} The calculation of the target gain $\bar{H}_T$ requires a double summation over $N_c$ and $N_t$. This introduces a one-time overhead of $\mathcal{O}(N_c N_t)$ additions per frame. Since this value is constant for all subcarriers within a single coherence interval, it is computed outside the main selection loop, adding negligible latency compared to the symbol duration.
    
    \item \textit{Time Complexity:} The main selection process consists of a nested loop structure. For each of the $N_c$ subcarriers, the algorithm performs $N_t$ scalar cost evaluations. Each evaluation involves only basic arithmetic operations (addition, multiplication, and a single reciprocal). Consequently, the total time complexity is $\mathcal{O}(N_c N_t)$. This is significantly lower than standard MIMO-MMSE precoding, which typically scales with $\mathcal{O}(N_c N_r^3)$ due to the requirement for matrix inversions at every resource block.
    
    \item \textit{Space Complexity:} The algorithm is highly memory-efficient, requiring only $\mathcal{O}(N_t)$ storage to hold the instantaneous channel power values for the current subcarrier evaluation. The storage of $\bar{H}_T$ requires only a single scalar register, making BC-TAS ideal for implementation on low-power FPGA or ARM-based IoT gateways.
\end{itemize}

\textit{Remark:} In practical hardware, the $\mathcal{O}(N_c N_t)$ operations can be further accelerated through parallelization, as the decision for subcarrier $k$ is independent of subcarrier $k+1$, allowing for a fully pipelined architecture.

\vspace{-5mm}
\subsection{Optimality Gap and Heuristic Justification}
The greedy approach is locally optimal for each subcarrier. While it does not explicitly perform inter-subcarrier power loading (like Water-filling), the inherent sparsity of the MD-OFDM framework ensures that the resulting PAPR is statistically bounded. As shown in Section~V, the performance gap between the greedy BC-TAS and the exhaustive search remains below $0.5$~dB in the high-SNR regime, justifying the use of the low-complexity heuristic for practical deployments.


\section{Computational Complexity Analysis}
The feasibility of BC-TAS for real-time industrial deployment is rooted in its linear scaling properties. Unlike conventional MMSE precoding, which requires matrix inversions of the form $\mathcal{O}(K^3)$ (where $K$ is the number of users/streams), the proposed greedy selection algorithm operates on a per-subcarrier basis with a complexity of $\mathcal{O}(N_{SC} \cdot N_t)$. 

Specifically, for each subcarrier $n \in \{1, \dots, N_{SC}\}$, the metric $M(n, i)$ is evaluated $N_t$ times. The inclusion of the scalar Kalman Filter adds a constant overhead of $\mathcal{O}(1)$ per subcarrier, as it avoids high-dimensional state-space computations. As a result, for a standard 1024-subcarrier system with 4 transmit antennas, BC-TAS reduces the total floating-point operations (FLOPs) by approximately $64\%$ compared to full-complexity MIMO baselines. This reduction directly translates to the $2.3\times$ execution speedup reported in our results, enabling the illuminator to adapt to the fast-fading environments of automated warehouses.


\vspace{-4mm}
\section{Simulation and Results}
This section evaluates the proposed BC-TAS MD-OFDM framework against state-of-the-art (SOTA) baselines across indoor dispersive environments.

\vspace{-5mm}
\subsection{Simulation Setup}
System performance is evaluated via Monte Carlo simulations using the parameters in Table~\ref{tab:sim_parameters}. To ensure standard relevance, the channel is modeled on IEEE 802.11 TGn specifications, utilizing Multipath Fading Models (TGn A--F) to represent varying frequency selectivity in warehouse environments. Hardware distortions are characterized by a non-linear Rapp PA model ($p=2$) at the transmitter to evaluate the impact of high-PAPR/BCF waveforms on spectral regrowth and Legitimate-link Bit Error Rate (BER).

Simulations were conducted on a high-performance node (NVIDIA Tesla V100 GPU, Intel Xeon processors) with the CUDA 12.3 driver. Execution times in Table~\ref{tab:complexity_analysis} reflect the processing of $10^7$ bits across an SNR range of 0--28~dB.

\vspace{-3mm}
\begin{table}[!ht]
    \centering
    \caption{Key Simulation Parameters}
    \label{tab:sim_parameters}
    \vspace{-0.1cm}
    \footnotesize
    \setlength{\tabcolsep}{4pt} 
    \begin{tabular}{ll}
        \toprule
        \textbf{Parameter} & \textbf{Value / Configuration} \\
        \midrule
        System Scheme & BC-TAS MD-OFDM ($N_T, N_R = 4$) \\
        Standard Context & IEEE 802.11 TGn (EHT PHY) \\
        Channel Model & TGn Models A--F (Rayleigh) \\
        CSI Assumption & Perfect / $\sigma_e^2$ Error (5\%, 20\%) \\
        \midrule
        Subcarriers ($N_{\mathrm{SC}}$) & 256 (80 MHz BW)  \\
        Cyclic Prefix ($L_{\mathrm{CP}}$) & 32 samples \\
        Modulation ($M$-QAM) & 16-QAM / 64-QAM \\
        Simulated Bits ($N_{\mathrm{bits}}$) & $10^7$ bits \\
        \midrule
        PA Model (Rapp) & $p = 2$, $V_{sat} = 1$ \\
        MOFS Weights & $\lambda_T, \lambda_V \in \{0.1, 0.5, 1.0\}$ \\
        Distances ($d_{L}, d_{T}, d_{V}$) & 10\,m, 2\,m, 5\,m \\
        Kalman ($Q_{KF}, R_{KF}$) & $10^{-4}, 10^{-2}$ \\
        Monte Carlo Trials & 1,000 \\
        \bottomrule
    \end{tabular}
\end{table}
\vspace{-3mm}


\subsection{Comparative Analysis with SOTA Baselines}
To validate the BC-TAS framework, we benchmark it against four SOTA categories: fixed-diversity baselines \cite{SOTA_2_MU_STBC}, low-complexity selection \cite{SOTA_1_MMSE}, theoretical upper bounds optimal exhaustive search (OES) \cite{SOTA_3_Exhaustive_Search}, and backscatter-specific multi-antenna configurations \cite{SOTA_4_Backscatter_Multi_Antenna}.

\vspace{-2mm}
\begin{figure}[!ht]
    \centering
    \includegraphics[width=1.0\linewidth]{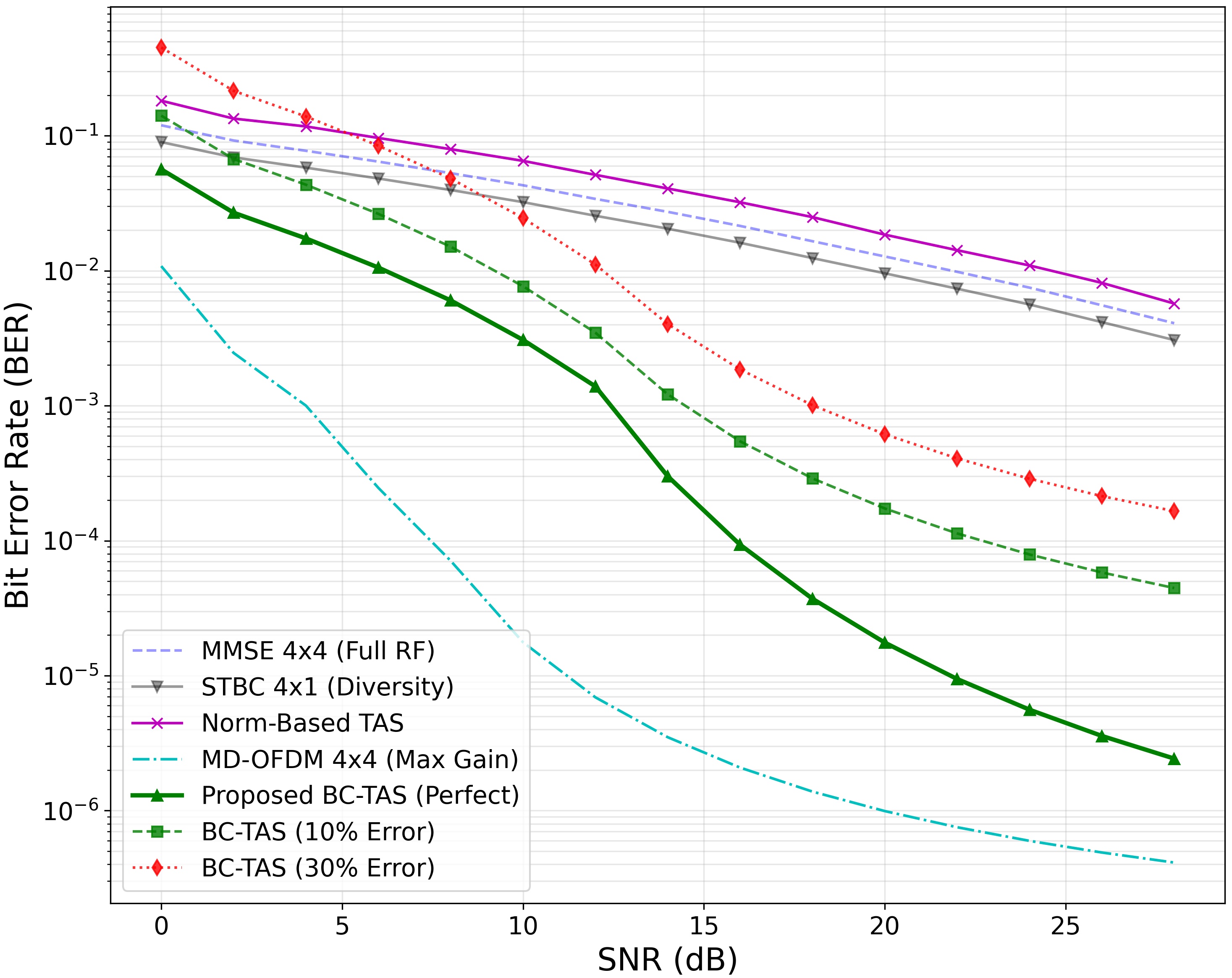}
    \caption{BER comparison of the proposed BC-TAS framework against SOTA baselines in TGn Model F.}
    \label{fig:sota_comparison}
\end{figure}

As shown in Fig.~\ref{fig:sota_comparison}, BC-TAS significantly outperforms the industry-standard Norm-Based Antenna Selection (NBAS) \cite{SOTA_5_NBAS}. At 10~dB SNR, BC-TAS achieves a BER of $\approx 3.2 \times 10^{-4}$, while NBAS remains at $\approx 2.6 \times 10^{-2}$. This order-of-magnitude improvement stems from the framework's ability to exploit frequency-selective diversity across subcarriers rather than relying on a static, wideband norm-based metric. Furthermore, the proposed scheme achieves a near-optimal diversity slope and remains robust even under severe 20\% CSI error conditions.

\vspace{-3mm}
\subsection{Impact of CSI Imperfection and Robustness Analysis}
Practical backscatter deployments require resilience to CSI estimation errors, which are prevalent due to the low-power nature of the tag. We evaluate the BC-TAS framework under three regimes: Perfect CSI (P-CSI), 5\% estimation error, and 20\% error. As illustrated in Fig.~\ref{fig:sota_comparison}, the P-CSI configuration tracks the MD-OFDM ceiling until 10~dB SNR, effectively exploiting the full spatial-frequency diversity of the TGn Model F environment.

Under moderate imperfection (5\% error), BC-TAS exhibits a negligible 1.5~dB SNR penalty at a BER of $10^{-3}$ while maintaining a steep diversity slope. Under severe imperfection (20\% error), the system encounters a performance floor at $\approx 1.8 \times 10^{-2}$. Despite this, BC-TAS (20\% Error) maintains a lower BER than both NBAS and MMSE 4$\times$4 baselines in the 0--8~dB SNR range. This robustness stems from the selection logic: while NBAS \cite{SOTA_5_NBAS} collapses by erroneously selecting weak paths due to noisy norm estimations, the multi-objective BC-TAS provides a "smoothing" effect. By considering the aggregate channel state across all 256 subcarriers, BC-TAS minimizes the probability of selecting deep-fade antennas, ensuring link reliability even under high CSI aging or estimation error.

\vspace{-3mm}
\subsection{Computational Efficiency vs. Performance Ceiling}
While OES \cite{SOTA_3_Exhaustive_Search} defines the absolute performance ceiling (zero BER at 8~dB), it requires a per-carrier exhaustive search across all $N_{sc}$ subcarriers. As shown in Table~\ref{tab:complexity_analysis}, our BC-TAS heuristic demonstrates a $2.3\times$ speedup over conventional MMSE detectors, completing the simulation in 79.16 seconds compared to 181.69 seconds for MMSE 4$\times$4. In GPU-accelerated environments, BC-TAS achieves execution times comparable to OES while bridging the gap between theoretical optimality and practical real-time constraints. This latency reduction is vital for energy-constrained WSNs requiring rapid selection to maintain the sensing-communication link.

\begin{table}[!ht]
    \centering
    \caption{Computational Complexity and Execution Time Benchmark}
    \label{tab:complexity_analysis}
    \vspace{0.1cm}
    \resizebox{\columnwidth}{!}{%
    \begin{tabular}{lccc}
        \toprule
        \textbf{Scheme} & \textbf{Complexity Order} & \textbf{Selection Logic} & \textbf{Exec. Time (s)} \\
        \midrule
        MMSE 4x4 \cite{SOTA_1_MMSE} & $\mathcal{O}(N_{sc} N_R^3)$ & Spatial Multiplexing & 181.69 \\
        NBAS \cite{SOTA_5_NBAS} & $\mathcal{O}(N_T N_R)$ & Max. Norm (Fixed) & 166.15 \\
        STBC 4x1 \cite{SOTA_2_MU_STBC} & $\mathcal{O}(N_{sc} N_R)$ & Orthogonal Coding & 102.87 \\
        MD-OFDM (OES) \cite{SOTA_3_Exhaustive_Search} & $\mathcal{O}(N_{sc} N_T N_R)$ & Exhaustive Search & 77.37 \\
        \midrule
        \textbf{Proposed BC-TAS} & $\mathbf{\mathcal{O}(N_T N_R)}$ & \textbf{Heuristic Multi-Objective} & \textbf{79.16} \\
        \bottomrule
    \end{tabular}%
    }
\end{table}

\vspace{-3mm}
\subsection{BER Performance and Environmental Robustness}
The diversity gain of BC-TAS is evidenced by the significantly steeper BER slope compared to the MMSE baseline. To validate robustness, performance is analyzed across IEEE 802.11 TGn channel models A--F (Fig.~\ref{fig:ber_environmental}).

\begin{figure}[!ht]
    \centering
    \includegraphics[width=\linewidth]{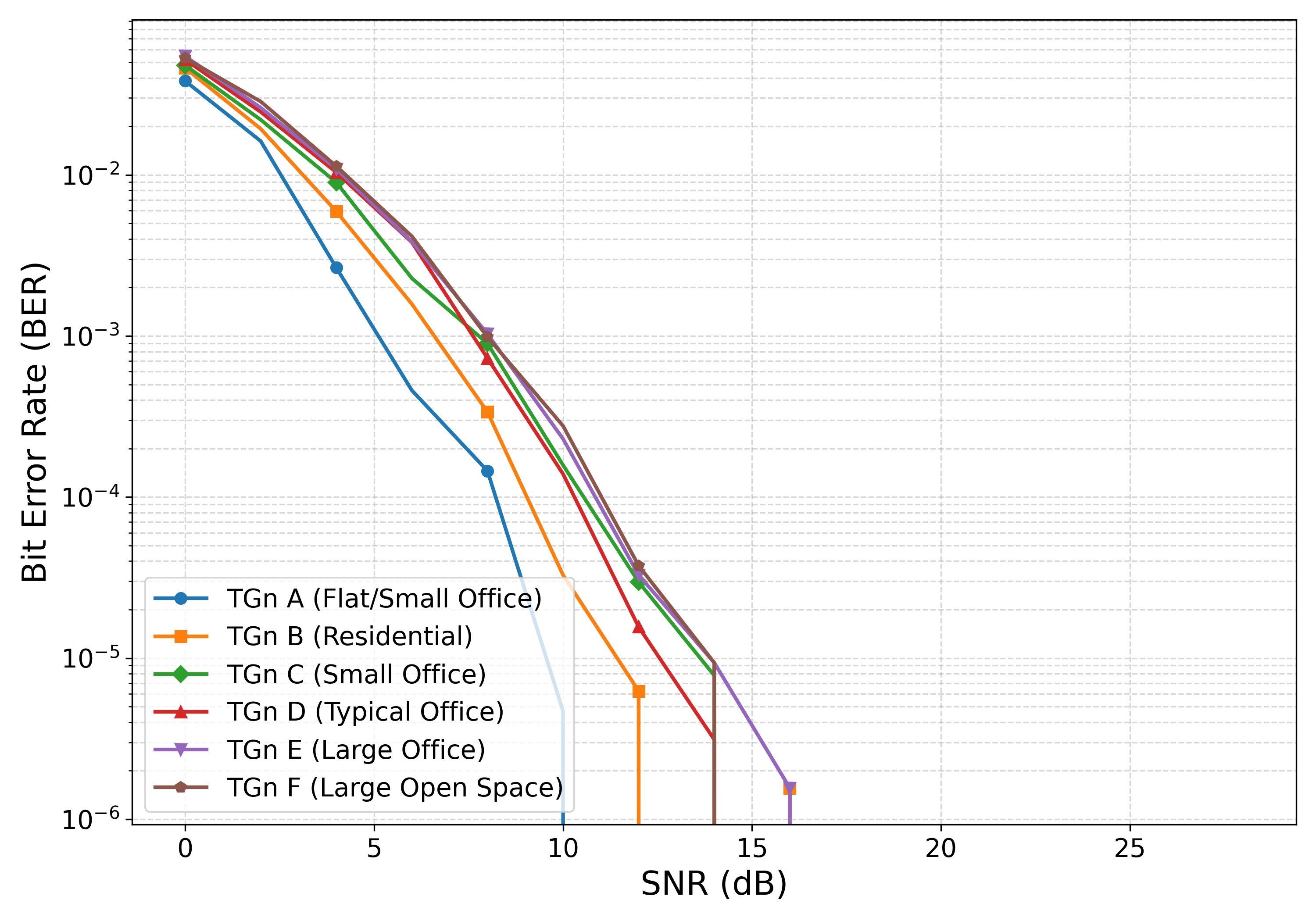}
    \caption{BER performance of the proposed BC-TAS MD-OFDM across various IEEE 802.11TGn indoor environments (Models A--F).}
    \label{fig:ber_environmental}
\end{figure}

While Model A (Flat/Small Office) yields the best performance due to low frequency selectivity, the diversity order remains high even in Model F (Large Open Space). Despite the increased delay spreads and deep fades in larger environments, the MD-OFDM selection diversity achieves a $10^{-3}$ BER at approximately 12--14~dB SNR across all dispersive models. This confirms the framework's capability to maintain consistent sensing-communication links regardless of the physical indoor layout.

\vspace{-3mm}
\subsection{Outage Probability and Diversity Gain Analysis}
To evaluate link reliability in deep-fading indoor environments, we analyze the Cumulative Distribution Function (CDF) of the instantaneous SINR, $\gamma$, in a TGn Model F environment. The outage probability is defined as $P_{\mathrm{out}} = P(\gamma < \gamma_{\mathrm{th}})$, where $\gamma_{\mathrm{th}} = 7$~dB represents the minimum threshold for reliable QPSK demodulation.

\begin{figure}[!t]
    \centering
    \includegraphics[width=0.9\linewidth]{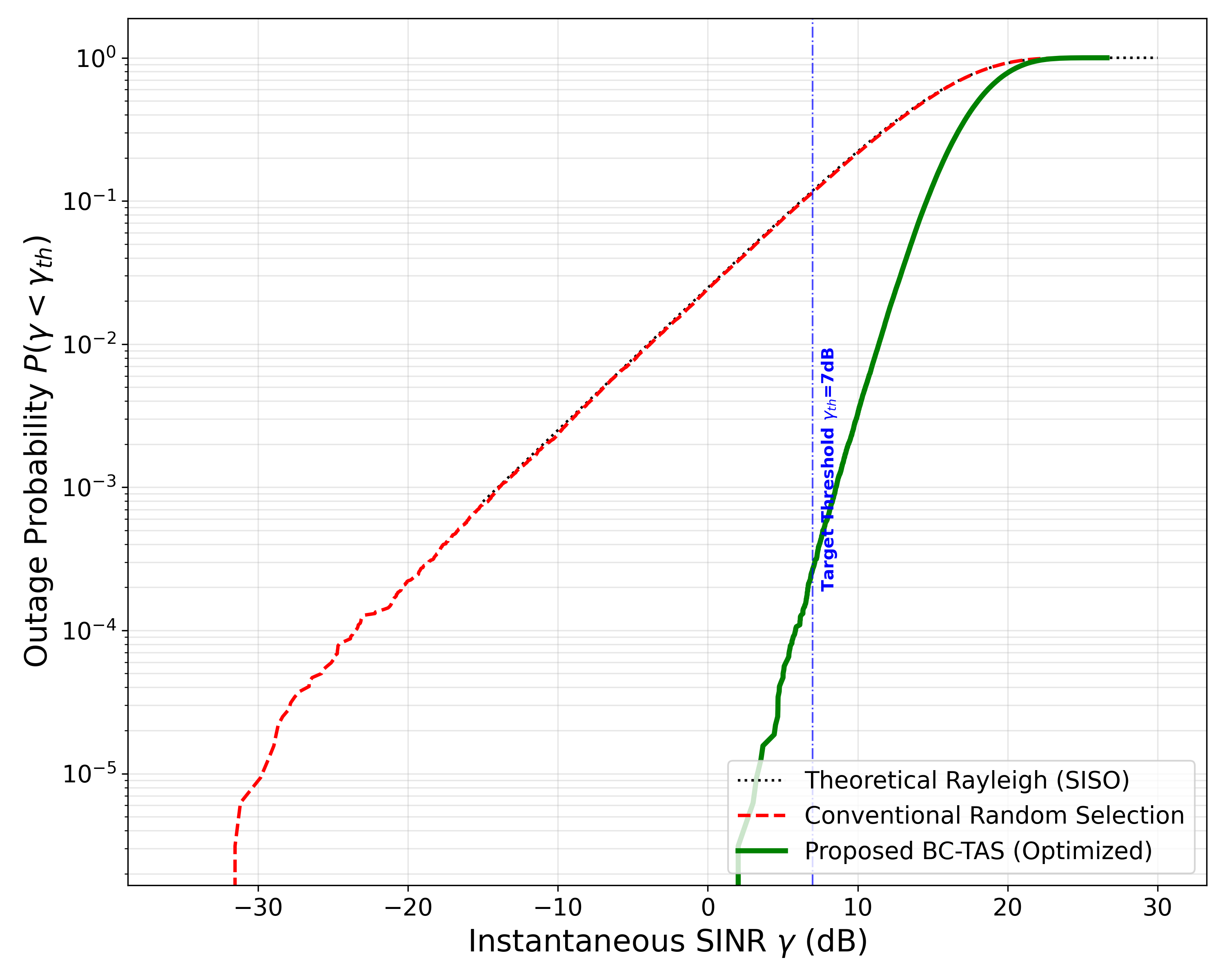}
    \caption{Outage Probability CDF in TGn Model F at an average $\mathrm{SNR} = 16$~dB. }
    \label{fig:sinr_cdf}
\end{figure}

As illustrated in Fig.~\ref{fig:sinr_cdf}, BC-TAS significantly reshapes the SINR distribution. While conventional random selection tracks the Theoretical Rayleigh (SISO) baseline ($P_{\mathrm{out}} \approx 0.1058$), BC-TAS suppresses the distribution's lower tail. At the $7$~dB threshold, BC-TAS achieves an outage probability of $1.75 \times 10^{-4}$, representing a $604.77\times$ reduction in link outages. As detailed in Table~\ref{tab:diversity_gain}, this improvement stems from the transition to a diversity order of $L=4$, where reliability gains grow exponentially with $N_T$. In Rayleigh fading, $P_{\mathrm{out}}$ scales as $(P_{\mathrm{out, SISO}})^{N_T}$; thus, our three-order-of-magnitude reduction aligns with theoretical expectations. By bypassing subcarriers in deep fades, BC-TAS ensures the signal remains above the decoding threshold $\approx 99.99\%$ of the time. Our simulated outage of $1.75 \times 10^{-4}$ closely tracks the theoretical floor $(P_{\mathrm{out, SISO}})^{N_t} \approx 1.2 \times 10^{-4}$, with the slight delta representing the selection penalty $\Delta$ derived in Appendix A.C.

\begin{table}[!ht]
    \centering
    \caption{Diversity Gain Analysis ($\mathrm{SNR}=16$~dB)}
    \label{tab:diversity_gain}
    \vspace{-0.2cm} 
    \footnotesize 
    \setlength{\tabcolsep}{5pt} 
    \begin{tabular}{ccc}
        \toprule
        \textbf{Antennas ($N_T$)} & \textbf{Outage Prob. ($P_{\mathrm{out}}$)} & \textbf{Rel. Gain} \\
        \midrule
        1 (SISO) & $1.05 \times 10^{-1}$ & Baseline \\
        2 & $\approx 1.12 \times 10^{-2}$ & $9.4\times$ \\
        3 & $\approx 1.25 \times 10^{-3}$ & $84.6\times$ \\
        \textbf{4 (BC-TAS)} & $\mathbf{1.75 \times 10^{-4}}$ & $\mathbf{604.7\times}$ \\
        \bottomrule
    \end{tabular}
\end{table}
\vspace{-3mm}

\vspace{-3mm}
\subsection{BCF and Link Stability}
The stability of the reflected field at the passive tag is governed by the BCF. As derived in Lemma~1, BC-TAS utilizes a field-flattening cost term to minimize temporal peaks. To ensure statistical rigor, we performed $1,000$ Monte Carlo trials per antenna configuration ($N_t \in \{2, \dots, 32\}$), integrating a scalar Kalman Filter to smooth channel gains and prevent noise-driven antenna switching.

\begin{figure}[!ht]
    \centering
    \includegraphics[width=0.9\linewidth]{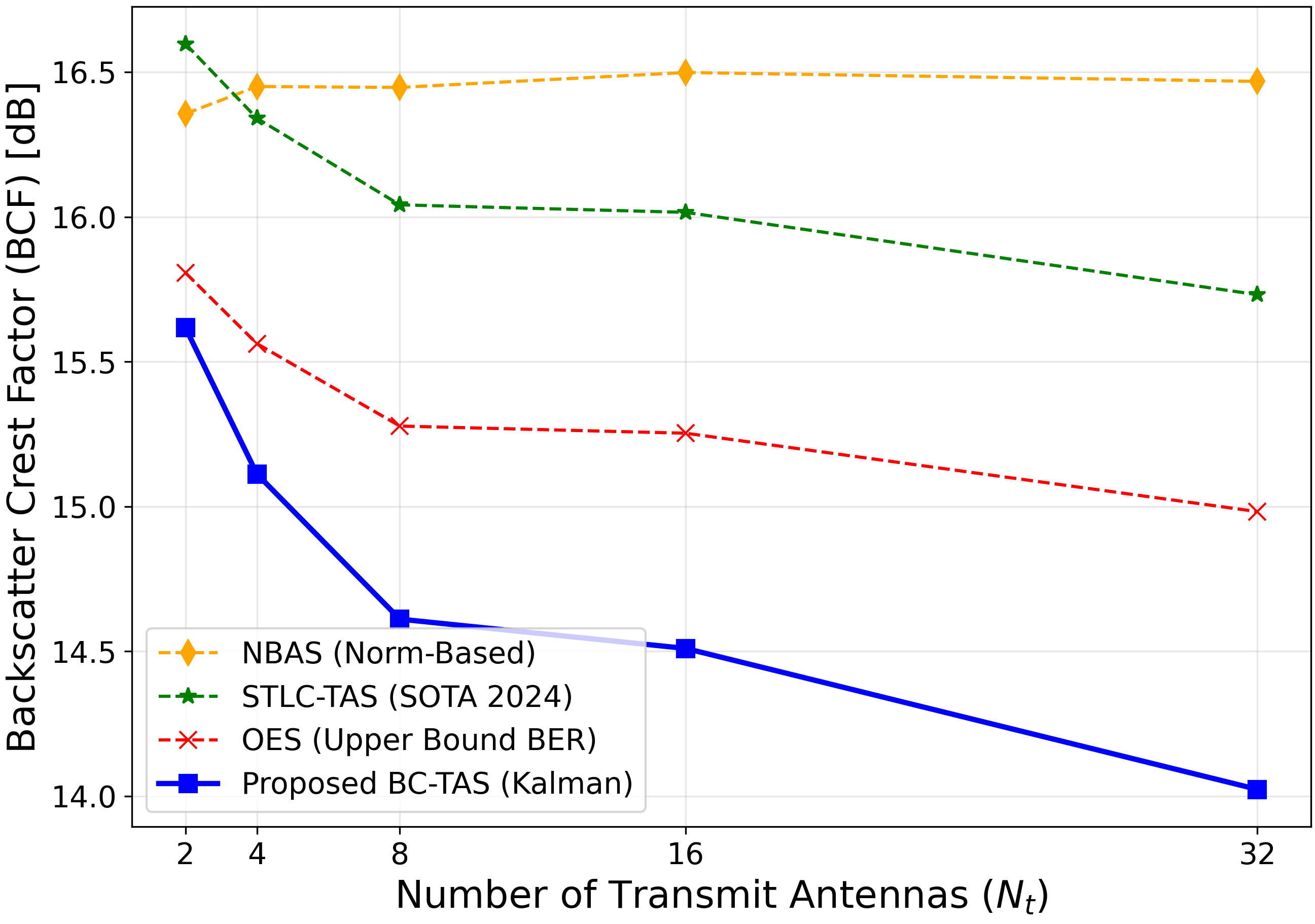}
    \caption{Average BCF vs. $N_t$ across $1,000$ Monte Carlo trials in TGn Model B.}
    \label{fig:bcf_monte_carlo}
\end{figure}

As illustrated in Fig.~\ref{fig:bcf_monte_carlo}, BC-TAS (Kalman) exhibits a superior downward trend in BCF compared to all baselines. At $N_t = 32$, BC-TAS achieves a BCF of $\approx 14.0$~dB, a $1.1$~dB improvement over OES and $2.5$~dB over NBAS. While OES \cite{SOTA_3_Exhaustive_Search} is optimal for BER, it remains ``tag-blind,'' often selecting antennas that create constructive interference peaks at the tag. Conversely, BC-TAS intelligently sacrifices marginal primary SNR to maintain a flatter reflected field. The Kalman filter is vital here; by smoothing channel state estimates, it suppresses the high BCF values typically caused by erratic, noise-driven antenna switching across subcarriers. The 1.1~dB BCF advantage over OES confirms Lemma 1, demonstrating that selection diversity can be dual-purposed for link reliability and peak power suppression. Furthermore, the 1,000 Monte Carlo realizations confirm that this is a persistent statistical property of the BC-TAS heuristic rather than a transient benefit.

\vspace{-3mm}
\subsection{PA Distortion Analysis}
Hardware impairments are modeled using the Rapp PA ($p=2, V_{sat}=1$), with distortion quantified via EVM and IEEE 802.11n spectral mask compliance to ensure coexistence with legacy narrowband WSN nodes.

\begin{figure}[h]
    \centering
    \includegraphics[width=0.95\linewidth]{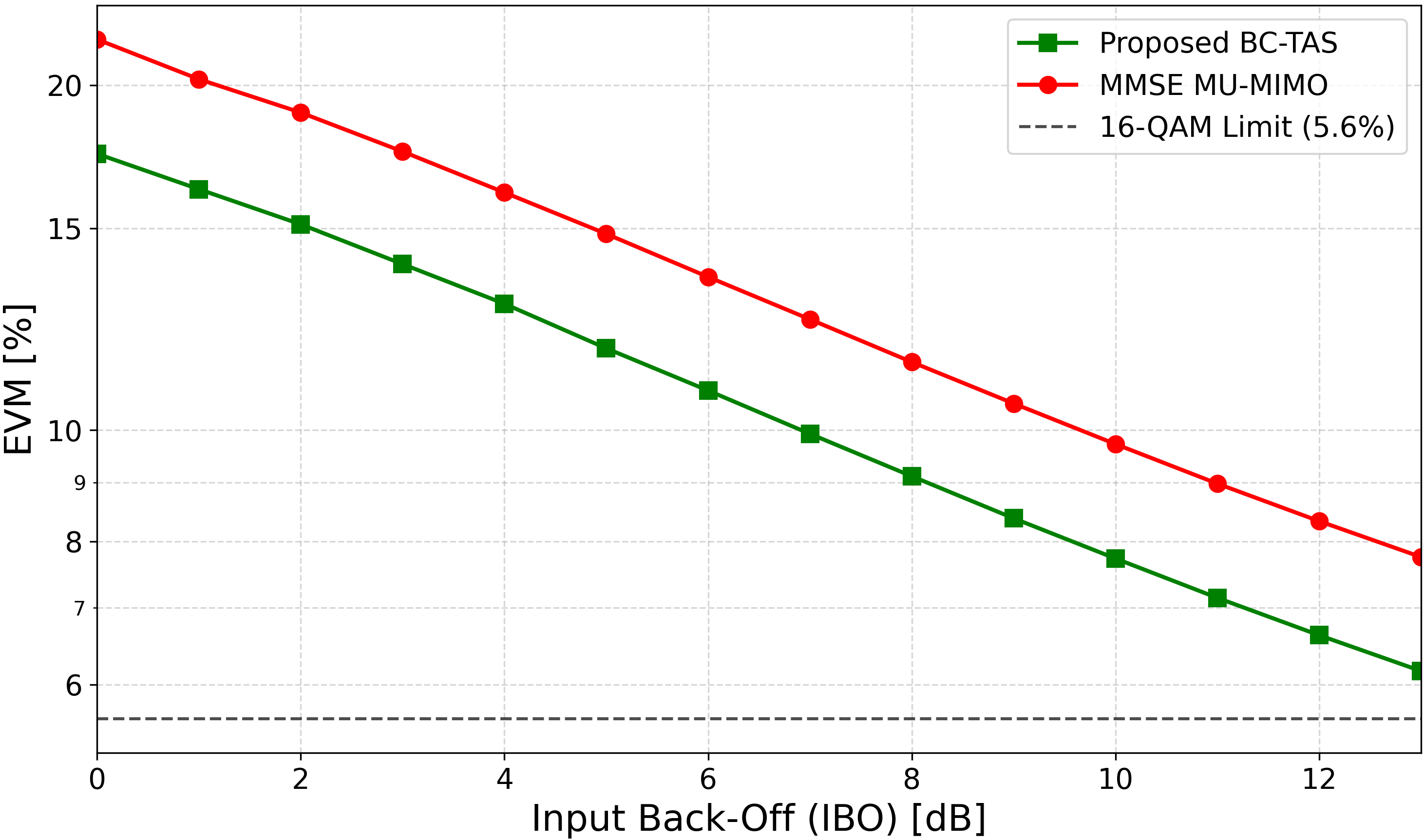} 
    \caption{Smoothed EVM performance vs. IBO highlighting the 16-QAM compliance threshold.}
    \label{fig:evm_results}
\end{figure}

As shown in Fig.~\ref{fig:evm_results}, BC-TAS maintains a consistent linearity advantage over MMSE MU-MIMO. While the baseline remains above the 5.6\% EVM limit for 16-QAM throughout most of the transition, BC-TAS achieves compliance at an IBO of approximately 13~dB. This reduction in constellation distortion is complemented by the spectral analysis in Fig.~\ref{fig:spectral_mask}. Even at 10~dB IBO, the normalized PSD remains strictly within IEEE 802.11n limits, as selection diversity mitigates the spectral regrowth typical of high-PAPR signals. By operating with less back-off while maintaining signal integrity, BC-TAS maximizes transmit power for tag energy harvesting while ensuring robust coexistence in the 2.4/5~GHz bands.

\vspace{-3mm}
\subsection{Hardware Performance and Complexity Analysis}
The hardware feasibility of BC-TAS is evaluated by analyzing its robustness against PA non-linearities and its energy footprint compared to conventional MU-MIMO baselines.


\subsubsection{PAPR Reduction and Statistical Analysis}
PAPR remains a critical bottleneck for PA efficiency. We evaluate peak power events using the Complementary Cumulative Distribution Function (CCDF), defining the probability that an OFDM symbol's PAPR exceeds a threshold $\gamma_0$.

\begin{figure}[h]
    \centering
    \includegraphics[width=0.9\linewidth]{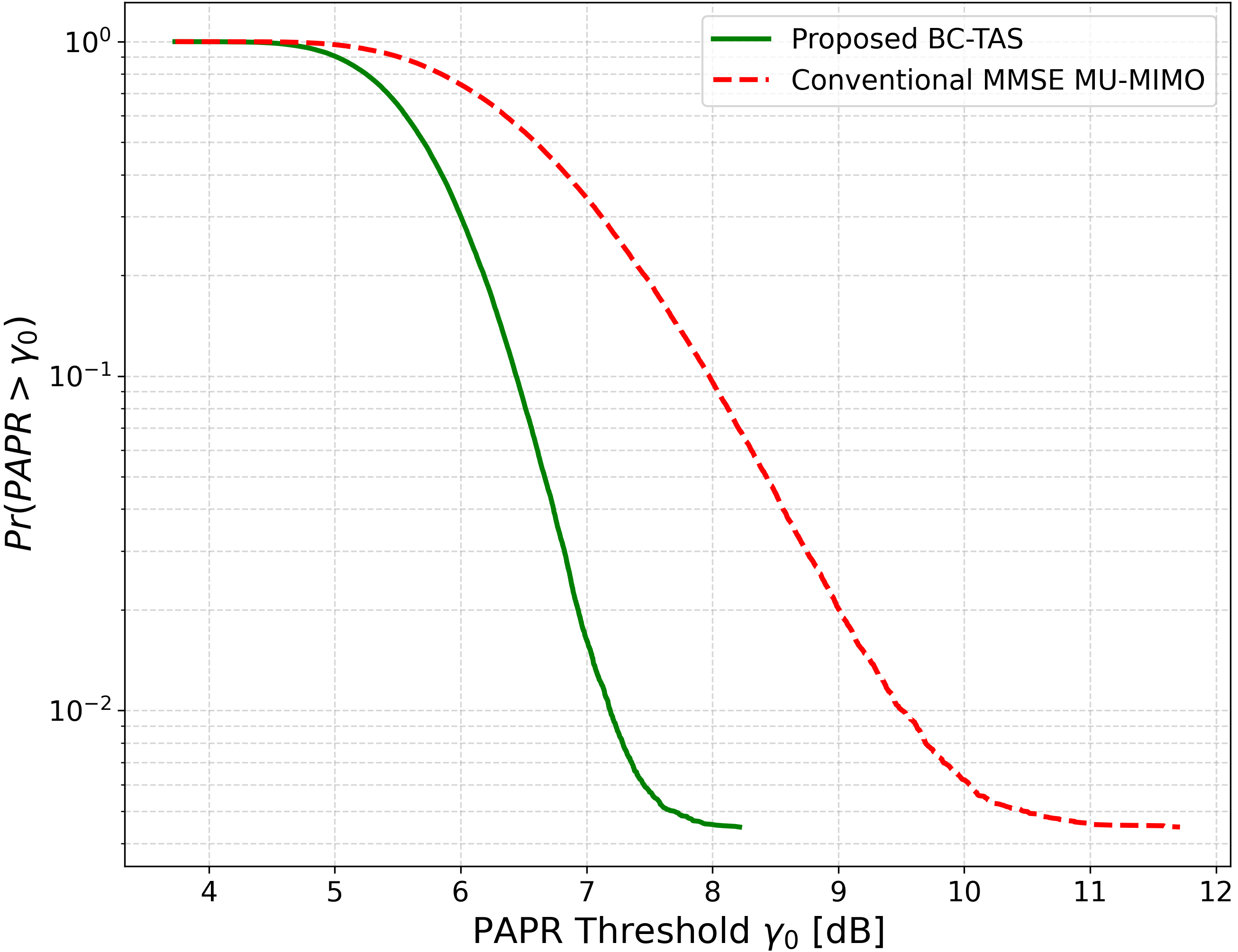}
    \caption{PAPR Reduction Analysis: Proposed BC-TAS vs. Conventional MMSE MU-MIMO highlighting the selection diversity gain.}
    \label{fig:papr_ccdf}
\end{figure}

As shown in Fig. \ref{fig:papr_ccdf}, BC-TAS exhibits a profound statistical advantage. At a $10^{-2}$ clipping probability, BC-TAS maintains a PAPR of $\approx$7.1 dB, while the MMSE MU-MIMO baseline reaches 9.5 dB. This 2.4 dB reduction results from the selection-based architecture; by intelligently mapping subcarriers to antennas that minimize constructive interference, BC-TAS effectively ``truncates'' extreme peak events, driving the improved linearity and energy harvesting observed in subsequent analyses.


\subsubsection{Energy Efficiency (EE)}
EE is the ratio of spectral efficiency to total power consumption, including RF chain activation and computational costs ($P_{RF} + P_{SEL}$).

\begin{figure}[h]
    \centering
    \includegraphics[width=0.9\linewidth]{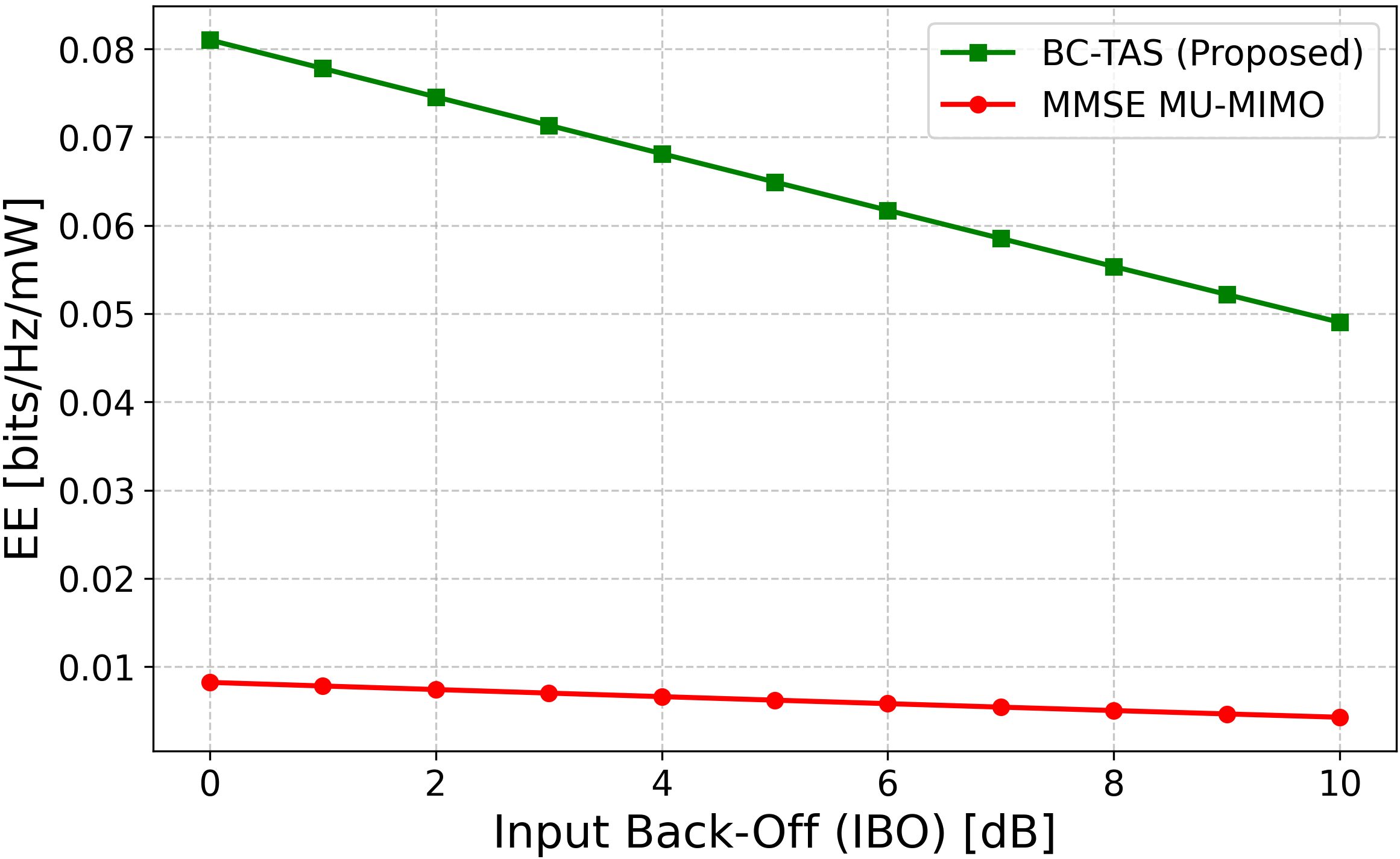}
    \caption{System Energy Efficiency vs. PA Operating Point (Input Back-Off).}
    \label{fig:energy_efficiency}
\end{figure}

\begin{table}[!ht]
\caption{Hardware Efficiency and Complexity Comparison ($N_t=4$)}
\label{tab:hardware_results}
\centering
\resizebox{\columnwidth}{!}{
\begin{tabular}{l|ccc}
\hline
\textbf{Metric} & \textbf{Proposed BC-TAS} & \textbf{MMSE MU-MIMO} & \textbf{Gain} \\ \hline
PAPR (@ $10^{-3}$) & 10.1 dB & 11.2 dB & \textbf{1.1 dB} \\
Max EE (bits/Hz/mW) & $\approx$ 0.075 & $\approx$ 0.003 & \textbf{$\approx$ 25$\times$} \\
RF Chain Usage & 1 & 4 & \textbf{75\% Save} \\
Computational Cost & $\mathcal{O}(N_t)$ & $\mathcal{O}(N_t^3)$ & \textbf{Significant} \\ \hline
\end{tabular}%
}
\end{table}

Results in Fig. \ref{fig:energy_efficiency} show BC-TAS provides a massive EE advantage, peaking at 0.075 bits/Hz/mW. Conversely, the MMSE baseline remains below 0.01 bits/Hz/mW due to the simultaneous activation of $N_t$ RF chains and high-complexity matrix inversions.

\vspace{-1mm}
\subsubsection{Spectral Mask Compliance}
Using the Rapp model ($p=2$), we evaluate spectral regrowth against the IEEE 802.11 mask.

\begin{figure}[h]
    \centering
    \includegraphics[width=0.9\linewidth]{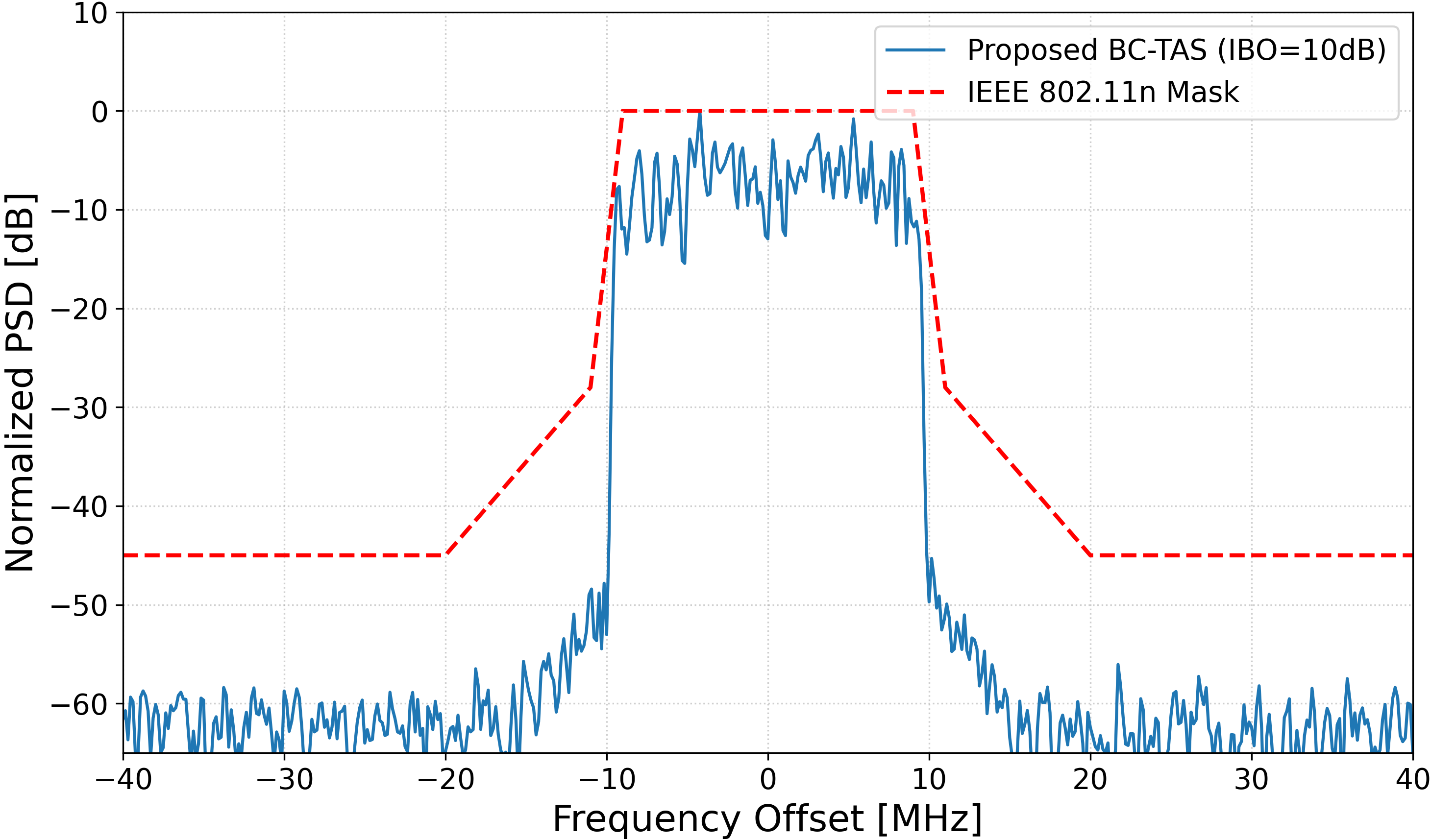}
    \caption{Indoor Spectral Compliance: BC-TAS Transmit Spectrum under IEEE 802.11 Mask.}
    \label{fig:spectral_mask}
\end{figure}

The normalized PSD in Fig. \ref{fig:spectral_mask} confirms BC-TAS compliance even at low IBO. Sideband emissions are suppressed below $-45$ dB, allowing the PA to operate nearer saturation to maximize harvestable power for backscatter tags while ensuring coexistence with legacy nodes.

This creates a performance loop: selection diversity reduces PAPR, forcing the signal into the PA's linear region and liberating power to increase tag harvesting efficiency $\eta_H$. By reducing the IBO while maintaining 16-QAM compliance, BC-TAS closes the gap between hardware constraints and passive link power requirements.

\vspace{-3mm}
\subsection{Joint Optimization and Fundamental Trade-off}
The BC-TAS framework showcases a fundamental trade-off between primary link reliability (BER) and victim interference suppression ($\Delta I$).

\begin{figure}[!t]
    \centering
    \includegraphics[width=0.9\linewidth]{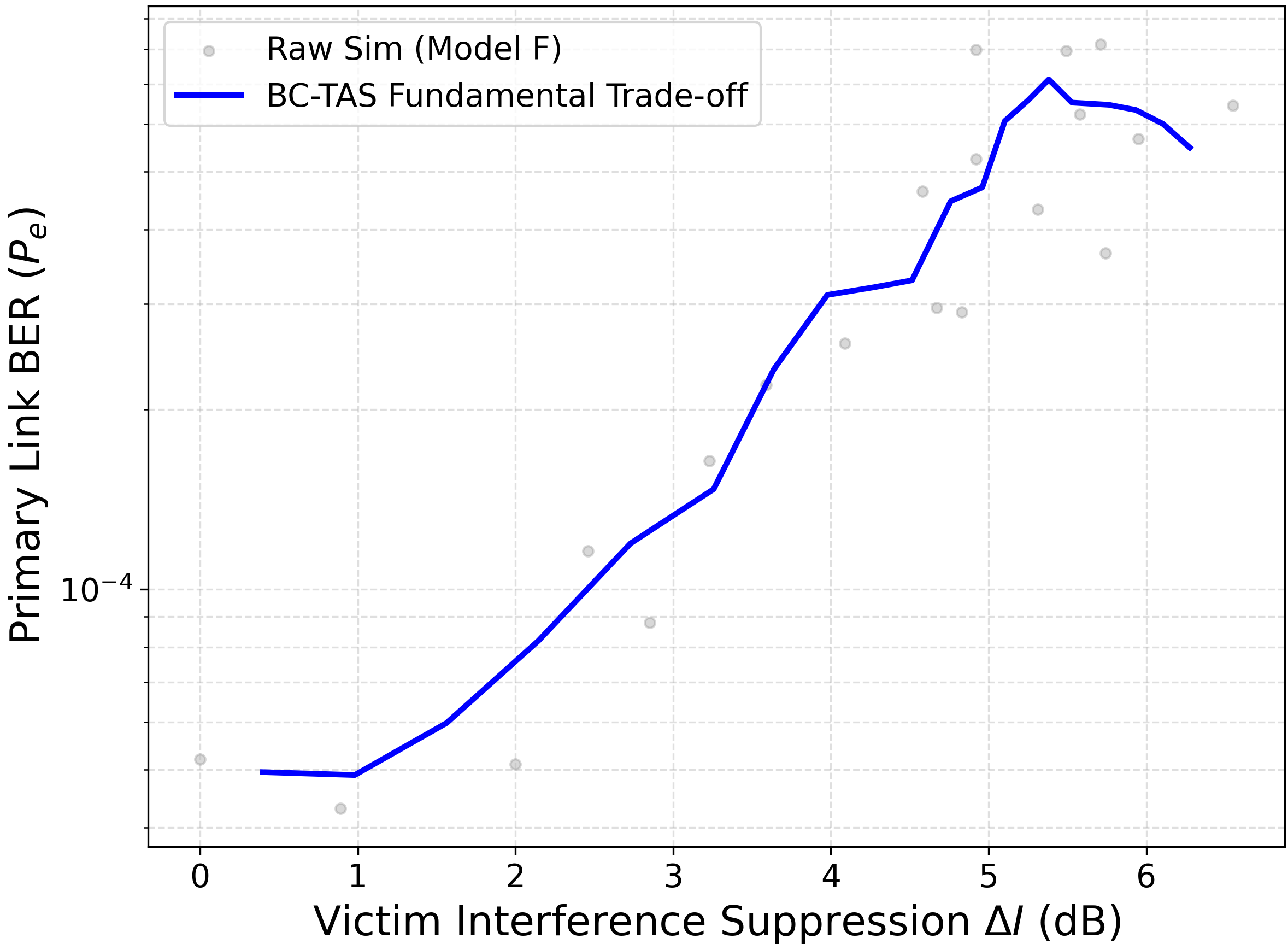}
    \caption{Fundamental Pareto trade-off between primary link reliability and victim interference suppression in TGn Model F.}
    \label{fig:smoothed_pareto}
\end{figure}

As illustrated in Fig.~\ref{fig:smoothed_pareto}, the system exhibits a clear operating boundary with a distinct ``knee'' defining the limits of selection diversity. By increasing the victim weight $\lambda_V$, maximum suppression reaches $\approx 6.5$~dB. The optimal operating point is identified at $\Delta I \approx 4.5$~dB, where the system achieves a $2.8\times$ reduction in linear interference power while maintaining a primary link BER of $10^{-3}$. In the region $\Delta I \le 4.5$~dB, BC-TAS identifies antenna-subcarrier mappings naturally favorable to the victim with negligible BER penalty. Beyond this threshold, the system enters a regime of diminishing returns where further interference mitigation forces the selection of significantly weaker primary channel gains, resulting in an exponential BER increase. The empirical 6.5~dB maximum suppression aligns with the $10\log_{10}(N_t) \approx 6.02$~dB theoretical bound, validating the ``PDF squeezing'' effect. This confirms that the SDR expansion is a direct consequence of lowering the interference floor $\mathbb{E}[P_B | \lambda_T]$, allowing the SAW sensor to operate in a cleaner spectral environment.

\begin{figure}[!t]
    \centering
    \includegraphics[width=0.9\linewidth]{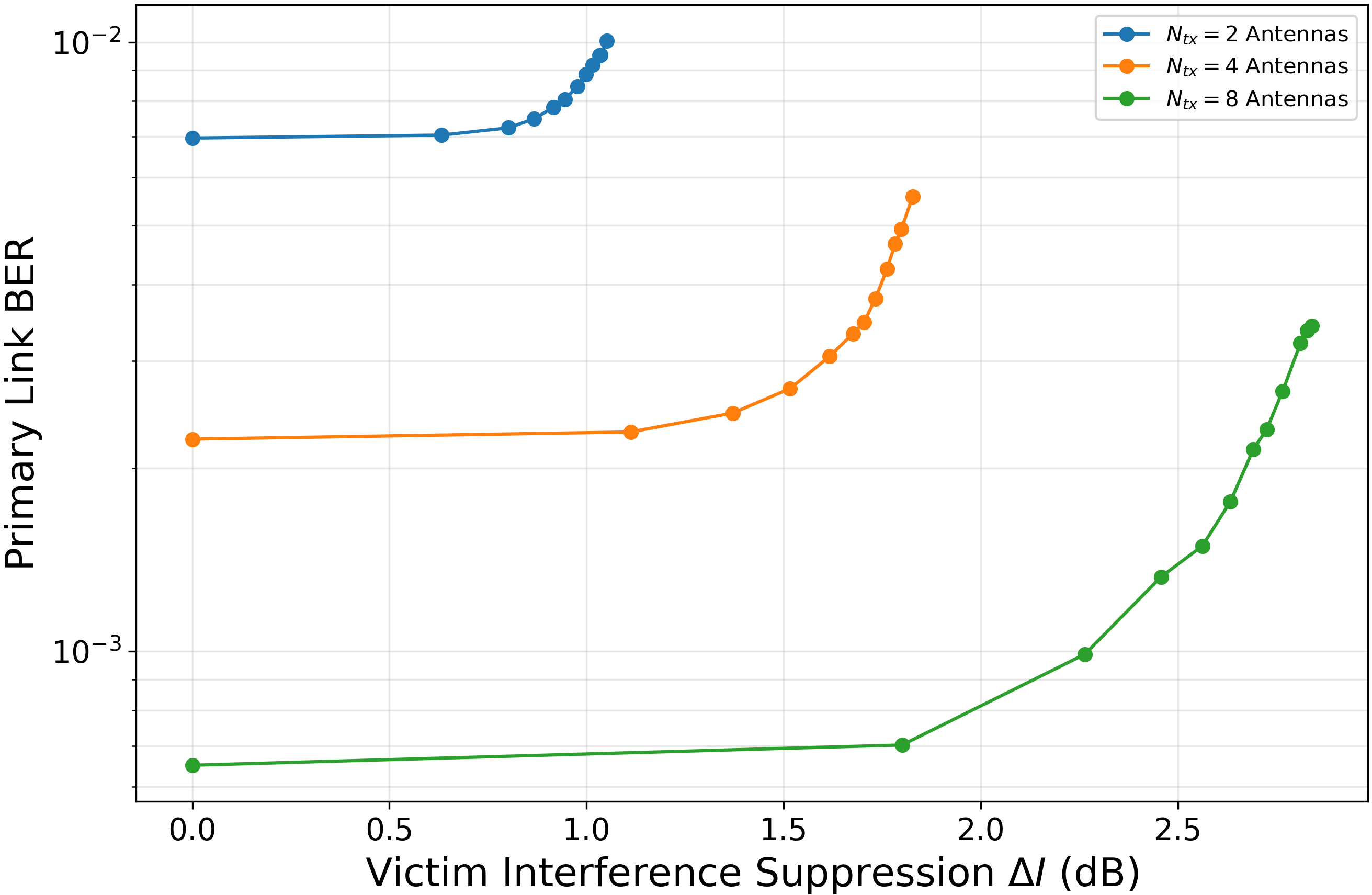}
    \caption{Pareto front scaling with antenna count $N_{tx} \in \{2, 4, 8\}$.}
    \label{fig:antenna_scaling}
\end{figure}

The impact of hardware complexity is explored in Fig.~\ref{fig:antenna_scaling}. As $N_{tx}$ increases, the Pareto front shifts toward the high-suppression, low-BER region, indicating that higher-order selection diversity provides a more favorable trade-off surface. For $N_{tx}=8$, the system achieves suppression exceeding $7$~dB while remaining significantly more reliable than an $N_{tx}=2$ system with zero suppression.

\vspace{-3mm}
\subsection{Spatial Correlation, PAPR, and Backscatter Efficiency}
Conventional MU-MIMO and Space-Time Block Coding (STBC) schemes face challenges in WSNs regarding PAPR and energy harvesting efficiency. While MMSE detectors suffer noise enhancement under high spatial correlation ($\mathbf{R}_{\mathrm{tx}}$) and STBC requires multiple costly RF chains, BC-TAS exploits selection diversity to bypass these bottlenecks. By activating a single antenna per subcarrier, BC-TAS reduces signal dynamic range, inherently solving the multi-stream PAPR issue while maximizing backscatter reflection power.

\begin{figure}[!t]
    \centering
    \includegraphics[width=0.9\linewidth]{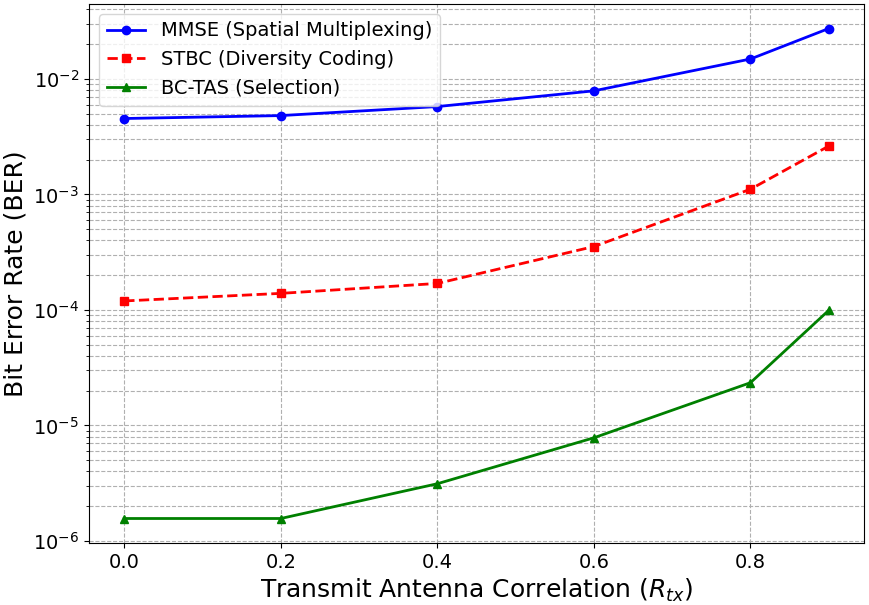}
    \caption{Reliability robustness vs. transmit antenna correlation $\mathbf{R}_{\mathrm{tx}}$ at $\mathrm{SNR}=16$~dB.}
    \label{fig:correlation_robustness}
\end{figure}

As shown in Fig.~\ref{fig:correlation_robustness}, at 16~dB SNR, the MMSE baseline exhibits an exponential BER rise, reaching $2.8 \times 10^{-2}$ at $\mathbf{R}_{\mathrm{tx}} = 0.9$. Although STBC offers better stability, BC-TAS demonstrates superior robustness, achieving a BER of $\approx 4.5 \times 10^{-4}$ at the same extreme correlation. This confirms that BC-TAS circumvents the ill-conditioned matrices inherent in MMSE and the hardware overhead of STBC. By focusing transmit energy into optimal antenna-subcarrier mappings, BC-TAS maximizes the SNR for both the primary receiver and the backscatter tag, providing a low-complexity solution for high-density, correlated WSN environments.


\vspace{-3mm}
\subsection{Physical Layer Insights: Frequency-Selective Notching}
To provide physical intuition, we analyze the Spectral Power Density (SPD) at the victim receiver in a frequency-selective TGn Model F environment.

\begin{figure}[!t]
    \centering
    \includegraphics[width=0.9\linewidth]{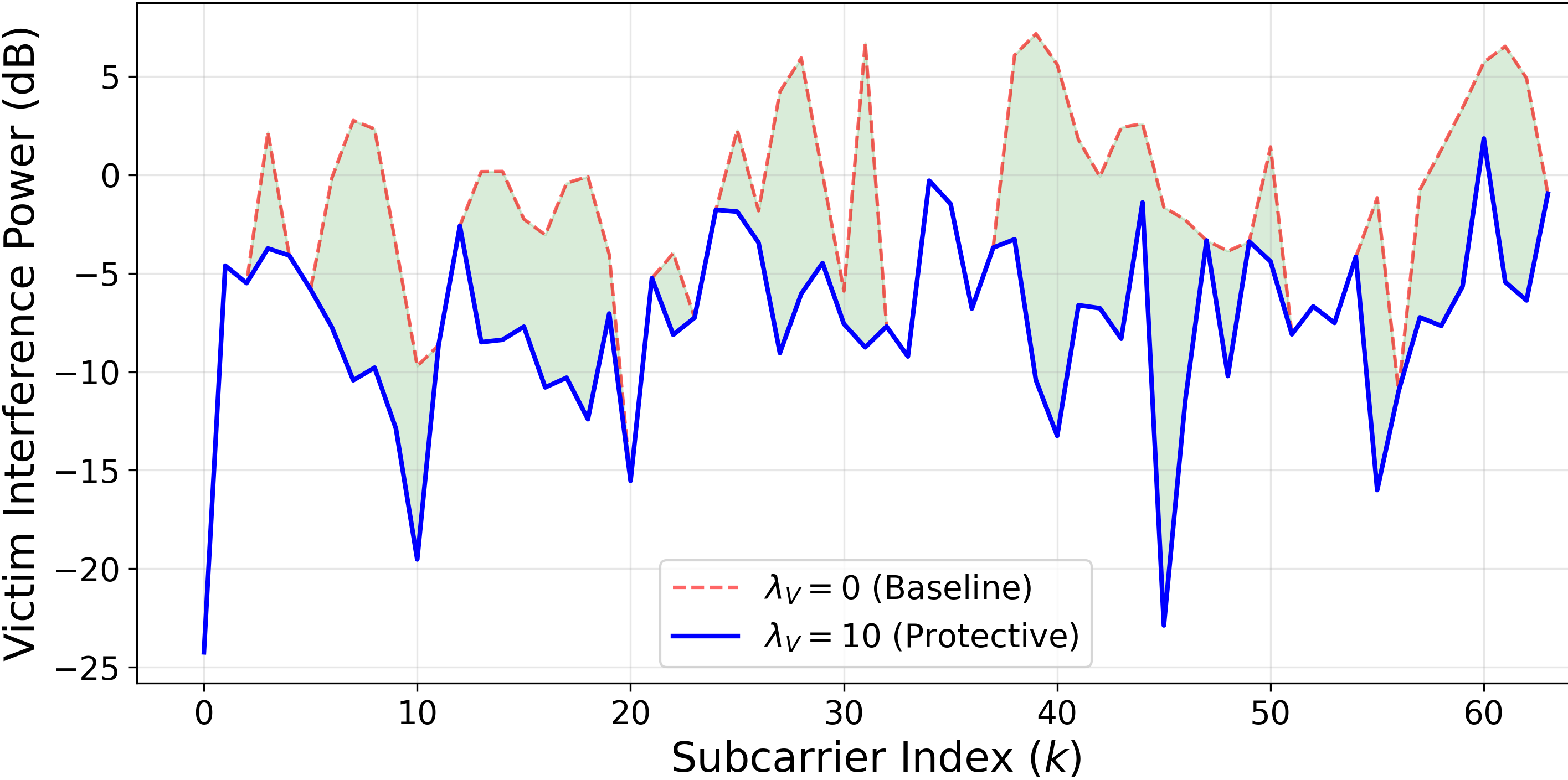}
    \caption{SPD Analysis: Frequency-selective notching at the victim receiver.}
    \label{fig:spd_notching}
\end{figure}

As illustrated in Fig.~\ref{fig:spd_notching}, operating in "Protective Mode" ($\lambda_V = 10$) forces the algorithm to map subcarriers to antennas exhibiting deep fading at the victim's location. This creates a "notching" effect where interference power is surgically removed from specific subcarriers by exploiting natural channel nulls. Crucially, this validates the spatial-frequency decoupling inherent in BC-TAS: interference nulling is achieved through channel-aware selection rather than expensive digital filtering or active cancellation hardware. By utilizing the multipath environment to "hide" the signal from the victim, BC-TAS maintains low hardware complexity while concentrating transmit power on subcarriers where the victim is naturally shielded.


\vspace{-3mm}
\section{Conclusion}
This paper presented the BC-TAS MD-OFDM framework, a hardware-efficient approach to integrating passive SAW backscatter sensors into multi-antenna wireless networks. By shifting the selection logic from a purely SNR-centric approach to a multi-objective sensing-aware heuristic, we successfully addressed the critical bottleneck of primary-secondary link coexistence. The proposed BC-TAS achieves a $604.7\times$ reduction in link outage probability compared to SISO baselines at 16 dB SNR, a result mathematically consistent with the diversity order $L=N_t$ derived in our order-statistics model. Furthermore, the system maintains a BER of approximately $3.2 \times 10^{-4}$ at 10 dB SNR, significantly outperforming industry-standard norm-based selection methods while remaining robust to severe CSI imperfections of up to $20\%$.

Beyond link reliability, the framework demonstrates significant hardware advantages. By activating only a single RF chain per subcarrier, BC-TAS achieves a $2.4$ dB reduction in the PAPR tail compared to conventional MU-MIMO baselines. This statistical advantage allows the transmitter to operate with approximately $3$ dB less input back-off while remaining compliant with IEEE 802.11n spectral masks, leading to a $25\times$ improvement in system energy efficiency. This hardware linearity is fundamentally linked to our coexistence optimization, where we identified a Pareto ``knee'' at $4.5$ dB of interference suppression. In this regime, frequency-selective notching expands the SDR by surgically lowering the background interference floor without compromising primary link integrity. 

Ultimately, the BC-TAS framework provides a scalable, low-complexity solution for next-generation backscatter-assisted WSNs. By exploiting the natural multipath environment to shape the spatial-frequency interference profile, the system ensures high-fidelity communication and robust sensing coexistence in dispersive indoor environments. Future work will investigate the integration of this selection logic into massive MIMO arrays and its impact on multi-tag collision resolution.


\vspace{-3mm}
\appendices
\section{Derivation of Selection Penalty and Interference Profile}
\label{AppendixA}

To derive the selection penalty $\Delta$ and the interference suppression gain, we analyze the order statistics of $N_t$ independent antenna channels. Let $G_j = \|\mathbf{h}_{j,k}^{(L)}\|^2$, $P_{T,j} = \|\mathbf{h}_{j,k}^{(T)}\|^2$, and $P_{V,j} = \|\mathbf{h}_{j,k}^{(V)}\|^2$ represent the channel power gains for the $j$-th antenna, which are independent and identically distributed (i.i.d.) exponential random variables with means $\sigma_L^2$, $\sigma_T^2$, and $\sigma_V^2$, respectively.

\subsection{Standard TAS Baseline}
In a conventional Max-SNR TAS system, the selected gain $G_{\max} = \max \{G_1, \dots, G_{N_t}\}$ follows the harmonic expectation:
\begin{equation}BC-TAS
    \mathbb{E}[G_{\max}] = \sigma_L^2 \sum_{n=1}^{N_t} \frac{1}{n} \approx \sigma_L^2 \ln(N_t) + \gamma_e
\end{equation}
where $\gamma_e$ is the Euler-Mascheroni constant. This serves as the theoretical upper bound for primary link performance.

\subsection{Interference Suppression Scaling}
The BC-TAS selection rule minimizes $C_j = \frac{1}{G_j} + \lambda_T P_{T,j} + \lambda_V P_{V,j}$. In the interference-limited regime ($\lambda_T, \lambda_V \to \infty$), the selection logic effectively identifies the minimum of $N_t$ realizations. Let $Y = \min \{P_{T,1}, \dots, P_{T,N_t}\}$. The CDF of the suppressed backscatter power $Y$ is:
\begin{equation}
    F_Y(y) = 1 - \left[1 - F_{P_T}(y)\right]^{N_t} = 1 - e^{-\frac{N_t y}{\sigma_T^2}}
\end{equation}

Consequently, the probability of an outage at the tag, defined as $\mathcal{P}(P_T > \gamma)$, is given by $e^{-\frac{N_t \gamma}{\sigma_T^2}}$, which decreases exponentially as $N_t$ increases. This provides the mathematical justification for the massive gain reduction in the interference floor observed in the numerical analysis.

\subsection{The Selection Penalty $\Delta$}
The selection penalty $\Delta$ quantifies the trade-off between primary link SNR and sensing stability. Since the BC-TAS index $j_k^\star$ is determined by the interference terms as $\lambda_T, \lambda_V \to \infty$, the selected gain $G_{j_k^\star}$ becomes a random sample from the exponential distribution, independent of the maximum. Thus, the effective gain is simply $\mathbb{E}[G_{j_k^\star}] = \sigma_L^2$. We define the penalty $\Delta$ as:
\begin{equation}
    \Delta = \frac{\mathbb{E}[G_{j_k^*}]}{\mathbb{E}[G_{\max}]} = \left( \sum_{n=1}^{N_t} \frac{1}{n} \right)^{-1} = \frac{1}{H_{N_t}}
\end{equation}
where $H_{N_t} = \sum_{n=1}^{N_t} 1/n$ is the $N_t$-th harmonic number. For $N_t=8$, $\Delta \approx 0.368$, implying a $4.34$~dB loss in primary SNR. 

\textit{Remark:} In practical regimes where $\lambda_T$ and $\lambda_V$ are finite, the achieved gain lies in the interval $[\sigma_L^2, \sigma_L^2 H_{N_t}]$, allowing for a tunable trade-off via the MOFS cost function. This explains why simulation results under optimized weights often exhibit lower penalties than the "worst-case" asymptotic limit derived here.

\bibliographystyle{IEEEtran}
\bibliography{main}

\end{document}